\date{}
\newtheorem{theorem}{Theorem}
\newtheorem{proposition}[theorem]{Proposition}
\newcommand{\supp}{\text{supp}}
\newcommand{\TD}{\text{\rm TD}}
\begin{document}
\title{On the Optimality of Time Division for Broadcast Channels}

\author{ Salman Beigi\\ 
{\it \small School of Mathematics,} {\it \small Institute for Research in Fundamental Sciences (IPM), Tehran, Iran}\\
{\it \small Department of Information Engineering, The Chinese University of Hong Kong, Hong Kong}}

\date{\today}%

\maketitle

\begin{abstract}
We provide a necessary and sufficient condition that under some technical assumption characterizes all two-receiver broadcast channels for which time division is optimal for transmission of private messages. 
 \end{abstract}


\section{Introduction} 

A two-receiver discrete memoryless broadcast channel $p(y, z| x)$ is a channel with one sender and two receivers. The goal of the sender is to send private messages to the receivers over multiple uses of the channel. The capacity region of the channel is the set of all rate pairs $(R_1, R_2)$ such that private information with asymptotically vanishing error can be sent to the receivers at rate $R_1$ and $R_2$ respectively. 

It is easy to verify that the capacity region of the broadcast channel $p(y, z|x)$ depends only on the \emph{marginal} channels $p(y|x)$ and $p(z|x)$ and not on the whole $p(y, z|x)$. So we may assume with no loss of generality that $p(y, z|x)=p(y|x) p(z|x)$. That is, we may think of a broadcast channel as two point-to-point channels with the same input sets.

There are some known inner and outer bounds for the capacity region of the broadcast channel~\cite{ElGamalKim}, yet deriving a single letter formula for the capacity region is a long standing open problem. 
The best inner bound for the capacity region of the broadcast channel is due to Marton~\cite{Marton} and is described as follows. Let $p_{UVWX}$ be an arbitrary distribution that induces the distribution $p_{UVWXYZ}$. Then any pair of non-negative numbers $(R_1, R_2)$ satisfying 
\begin{align}\label{eq:Marton}
R_1&\leq I(U, W; Y),\nonumber\\
R_2 & \leq I(V, W; Z),\nonumber\\
R_1+R_2&\leq \min\big\{I(W; Y),\, I(W;Z) \big \} + I(U; Y|W) +I(V; Z|W) - I(U; V|W),
\end{align}
is an achievable rate pair. 

The best outer bound on the capacity region of the broadcast channel is called the $UV$ outer bound~\cite{UV}. According to this outer bound for any achievable rate pair $(R_1, R_2)$ there is a distribution $p_{UVX}$ with the induced distribution $p_{UVXYZ}$ such that 
\begin{align}\label{eq:UV}
R_1&\leq I(U; Y),\nonumber\\
R_2 & \leq I(V; Z),\nonumber\\
R_1+R_2&\leq \min\big\{I(U; Y) +I(V; Z|U), \, I(V; Z) +I(U; Y|V)\big\}.
\end{align}

As mentioned above, the Marton inner bound~\eqref{eq:Marton} and the UV outer bound~\eqref{eq:UV} do not match in general, and the capacity region of an arbitrary broadcast channel is not known.

A simple achievable rate region is derived by \emph{time division}.  
Let $C_1=\max_{p_X} I(X; Y)$ be the capacity of the first channel $p(y|x)$ and $C_2=\max_{p_X} I(X; Z)$ be the capacity of the second channel $p(z|x)$. By ignoring the second receiver, the sender can transmit information to the first receiver at the highest possible rate, namely $C_1$. Thus $(R_1, R_2)=(C_1, 0)$ is achievable. Similarly $(R_1, R_2)= (0, C_2)$ is in the capacity region. Moreover, the sender can use time sharing; she can send information to the first receiver in $\alpha\in [0,1]$ fraction of uses of the channel, and then send information to the second receiver in the remaining uses of the channel. Then the rate pair $(R_1, R_2)=(\alpha C_1, (1-\alpha)C_2)$, for any $\alpha\in [0,1]$, is  achievable. More precisely, the whole set 
$$\mathcal R_{\TD}:=\Big\{ (R_1, R_2):\, \frac{R_1}{C_1} + \frac{R_2}{C_2}\leq 1,\, R_1, R_2\geq 0  \Big\},$$
which we call the \emph{time division} rate region, is in the capacity region.

The main result of this paper is a characterization of broadcast channels for which the time division rate region $\mathcal R_{\TD}$ is equal to the capacity region.   Here is an informal statement of our main result. 

\begin{theorem}\label{thm:informal} (Informal) Let $p(y|x)$ and $p(z|x)$ be two point-to-point channels with capacities $C_1$ and $C_2$ respectively. Suppose that $C_1\geq C_2$ and that the channels $p(y|x), p(z|x)$ satisfy some technical assumptions. Then $\mathcal R_{\TD}$ is equal to the capacity region of the broadcast channel $p(y, z|x)=p(y|x) p(z|x)$ if and only if either $C_1< C_2$ and 
\begin{align}\label{eq:main-ineq}
\frac{I(X; Y)}{C_1} \leq  \frac{I(X;Z)}{C_2}, \qquad \forall p_X,
\end{align}
or $C_1=C_2$ and the two channels are more capable comparable. 
\end{theorem}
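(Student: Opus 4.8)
The plan is to split along the two regimes of the statement, $C_1>C_2$ and $C_1=C_2$, and in each one to prove both implications separately. In both regimes, since $\mathcal R_{\TD}$ always lies inside the capacity region, and the two sets are closed, convex, and share the extreme points $(C_1,0)$ and $(0,C_2)$, the equality ``$\mathcal R_{\TD}=$ capacity region'' is equivalent to the single scalar statement that every achievable pair $(R_1,R_2)$ satisfies $R_1/C_1+R_2/C_2\le 1$; since $C_1\ge C_2$ this is in turn equivalent to $\max\big(R_1+\tfrac{C_1}{C_2}R_2\big)=C_1$ over the capacity region, the bound ``$\ge$'' being witnessed by $(C_1,0)$.

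For the ``if'' direction when $C_1>C_2$, I would bound $R_1+\tfrac{C_1}{C_2}R_2$ using the $UV$ outer bound \eqref{eq:UV}. Fix an achievable pair and the associated $p_{UVX}$. Over the polytope in $(R_1,R_2)$ cut out by the four inequalities of \eqref{eq:UV}, the functional $R_1+\tfrac{C_1}{C_2}R_2$ (with $C_1/C_2>1$) is maximized at a vertex that pushes $R_2$ as high as the constraints allow; enumerating the finitely many such vertices, every case in which the resulting value is not immediately below $C_1$ (the others having $R_2$ capped by $I(V;Z)\le I(X;Z)\le C_2$) reduces to showing $\tfrac{I(U;Y|V)}{C_1}+\tfrac{I(V;Z)}{C_2}\le 1$. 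The key step is to feed \eqref{eq:main-ineq} the conditional input distributions $p_{X|V=v}$ and average over $v$, which yields $\tfrac{I(X;Y|V)}{C_1}\le\tfrac{I(X;Z|V)}{C_2}$; combining this with the data-processing bound $I(U;Y|V)\le I(X;Y|V)$ and the chain rule $I(X;Z|V)+I(V;Z)=I(X;Z)\le C_2$ closes the argument, in fact giving $R_1/C_1+R_2/C_2\le I(X;Z)/C_2\le 1$. I expect this direction to require no technical assumptions beyond what makes \eqref{eq:UV} valid.

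For the ``only if'' direction when $C_1>C_2$, I would argue the contrapositive: if some $p_{X^0}$ violates \eqref{eq:main-ineq}, exhibit an achievable pair outside $\mathcal R_{\TD}$ inside Marton's region \eqref{eq:Marton}. The natural candidate time-shares a block in which superposition coding is applied to an input distribution built from $p_{X^0}$ (and, if needed, from the capacity-achieving input of the first channel) with a block devoted entirely to the second receiver; such a scheme achieves points $\big(\lambda r_1,\ \lambda r_2+(1-\lambda)C_2\big)$ with $(r_1,r_2)$ in a superposition region, and one wants to pick the auxiliary so that $r_1/C_1+r_2/C_2>1$. This is where the technical assumptions must do their work: they should exclude the ``redundant'' channels for which Marton's region collapses onto $\mathcal R_{\TD}$ even though \eqref{eq:main-ineq} fails, and guarantee that the behaviour of $I(\,\cdot\,;Y)$ and $I(\,\cdot\,;Z)$ near the relevant input distribution leaves room to choose such an auxiliary. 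I expect this existence statement --- converting the pointwise failure of \eqref{eq:main-ineq} into a concrete admissible choice of auxiliary and input that provably beats the time-division line, and pinning down exactly the hypotheses under which it is always possible --- to be the main obstacle of the whole proof.

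For the regime $C_1=C_2=:C$, where $\mathcal R_{\TD}=\{R_1+R_2\le C\}$, the ``if'' direction is short: assuming without loss of generality that the first channel is more capable than the second, invoke the known single-letter capacity region of a more capable broadcast channel and observe that its sum-rate constraint is $R_1+R_2\le I(X;Y)\le C$, so the capacity region sits inside $\mathcal R_{\TD}$ and hence equals it. For the ``only if'' direction, suppose the channels are not more capable comparable, so there are inputs $p_{X^a}$ with $I(X^a;Z)>I(X^a;Y)$ and $p_{X^b}$ with $I(X^b;Y)>I(X^b;Z)$; I would then use \eqref{eq:Marton} with $W$ constant and auxiliaries $U,V$ obtained by correlating modifications of $X^b$ and $X^a$ through a common input, arranged so that $I(U;Y)+I(V;Z)-I(U;V)>C$, producing an achievable pair with $R_1+R_2>C$. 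As in the previous paragraph, turning the failure of comparability into an explicit such $p_{UVX}$, and isolating the technical assumptions under which this can be done, is the delicate part.
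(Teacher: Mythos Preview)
Your ``if'' direction is essentially the paper's: the same $UV$-outer-bound computation showing $R_1/C_1+R_2/C_2\le I(X;Z)/C_2\le 1$ once \eqref{eq:main-ineq} is applied to each conditional $p_{X|V=v}$. (The paper unifies the $C_1=C_2$ case into this same computation rather than invoking the more-capable capacity region, but your variant is fine.)

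The ``only if'' direction is where your plan diverges from the paper and where it has a genuine gap. You propose the contrapositive---start from a single $p_{X^0}$ violating \eqref{eq:main-ineq} (or a pair $p_{X^a},p_{X^b}$ witnessing incomparability) and build a Marton point beating the time-division line---but you give no construction and flag this yourself as ``the main obstacle.'' Notice, too, that your outer time-sharing $(\lambda r_1,\lambda r_2+(1-\lambda)C_2)$ beats $\mathcal R_{\TD}$ iff $(r_1,r_2)$ already does, so that layer buys nothing; you are really asking for a single superposition point with $r_1/C_1+r_2/C_2>1$, and it is not clear how failure of \eqref{eq:main-ineq} at one input produces such a point.

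The paper does not argue by contrapositive. It works directly: it parameterizes a family of Marton auxiliaries by \emph{arbitrary} $r_{UX}$ and $s_{VX}$, glued via a binary time-sharing variable $Q$ with $\widetilde W=(Q,W)$, writes down the resulting achievable pair, and imposes $R_1/C_1+R_2/C_2\le 1$. Specializing to $r_X\in\Pi_1$, $s_X\in\Pi_2$ collapses this to a dichotomy: either $I(V;Y)_s\le I(V;Z)_s$ for all $s_{VX}$ with $s_X\in\Pi_2$, or $I(U;Z)_r/C_2\le I(U;Y)_r/C_1$ for all $r_{UX}$ with $r_X\in\Pi_1$. The technical assumption you were looking for is exactly that both channels admit a capacity-achieving input of \emph{full support} ($\mathcal K'_1=\mathcal K'_2=\mathcal X$); this lets one perturb that input toward any $p_X$ by $\epsilon$, so the dichotomy, read to first order in $\epsilon$, becomes a KL inequality in the optimal output distributions, and Proposition~\ref{prop:K-p} then converts it to \eqref{eq:main-ineq} (respectively, to more-capable comparability when $C_1=C_2$). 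This package---parameterized Marton family with capacity-achieving marginals plus a first-order perturbation around a full-support optimizer---is the idea missing from your plan.
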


Recall that a channel $p(y|x)$ is called \emph{more capable} than $p(z|x)$ if for all input distributions $p_X$ we have
$$I(X; Y) \geq  I(X;Z).$$ 
We say that two channels $p(y|x)$ and $p(z|x)$ are more capable comparable if either $p(y|x)$ is more capable than $p(z|x)$ or vice versa. 

A partial characterization of \emph{degraded} broadcast channels for which time division is optimal is provided in~\cite[Theorem 3]{GGA14} that is similar to our characterization.

To prove this theorem we use some known facts about the set of \emph{capacity achieving distributions} of a point-to-point channel~\cite[Theorem 13.1.1]{CoverThomas}. For the convenience of the reader we also present the proofs of these facts.

\section{Capacity achieving distributions}

We use quite standard notations in this paper (see, e.g.,~\cite{ElGamalKim}). Sets are denoted by calligraphic letters such as $\mathcal X$, and a distribution on such a set is specified by a subscript as in $p_X$. Discrete memoryless point-to-point channels are determined by a set of conditional distributions $\{p_{Y|x}, x\in \mathcal X\}$, on a set $\mathcal Y$. For ease of notation we denoted such a channel by $p(y|x)$. For $\lambda\in [0,1]$ we use the notation $\bar \lambda:=1-\lambda$. To avoid confusions, when a mutual information $I(X; Y)$ is computed with respect to a distribution $p_{XY}$ we denote it by $I(X; Y)_p$.

Let $p(y|x)$ be a discrete memoryless point-to-point channel with capacity $C=\max_{p_X} I(X; Y)$. 
For arbitrary distributions $p_X$ and $r_Y$ define 
$$\psi(p_X, r_Y) = \sum_x p(x)D(p_{Y|x} \| r_Y ) = D(p_{XY}\| p_X r_Y) = I(X; Y)_p + D(p_Y\| r_Y),$$
where $D(\cdot \| \cdot)$ is the KL divergence, and $p_{XY}$ is the induced distribution on the input and output of the channel with input distribution $p_X$. 
Then by the joint convexity of KL divergence and Sion's minimax theorem we have 
$$\max_{p_X} \min_{r_Y} \psi(p_X, r_Y) =  \min_{r_Y} \max_{p_X} \psi(p_X, r_Y).$$
Let us compute each side of the above equation. By the non-negativity of KL divergence we have
$$\max_{p_X} \min_{r_Y} \psi(p_X, r_Y)  =\max_{p_X} \min_{r_Y}  I(X; Y)_p + D(p_Y\| r_Y) =  \max_{p_X} I(X; Y) = C.$$
On the other hand, by the linearity of $\psi$ in $p_X$ have 
$$\min_{r_Y} \max_{p_X} \psi(p_X, r_Y) = \min_{r_Y} \max_{x_0} D(p_{Y|x_0}\| r_Y).$$
As a result, 
\begin{align}\label{eq:min-max-q-Y} 
\min_{r_Y} \max_{x_0} D(p_{Y|x_0}\| r_Y) = C.
\end{align}

Observe that the minimum in~\eqref{eq:min-max-q-Y} is achieved. So
let $r^*_Y$ be some optimal distribution there, i.e., $r^*_Y$ is such that 
$$\max_{x_0} D(p_{Y|x_0}\| r^*_Y)=\max_{p_X} I(X; Y)_p + D(p_Y\| r^*_Y)=C.$$
Then for every $p_X$ we have
\begin{align}\label{eq:CA-ineq} 
\psi(p_X, r^*_Y) =I(X; Y)_p + D(p_Y\| r^*_Y)\leq C.
\end{align}

Let $\Pi$ be the set of capacity achieving distributions:
$$\Pi=\arg\max_{p_X} I(X; Y).$$
Then by~\eqref{eq:CA-ineq} for every $p_X\in\Pi$ we have $D(p_Y\| r^*_Y) = 0$, i.e., $p_Y=r^*_Y$.  This means that, for any capacity achieving distribution $p_X\in \Pi$ its induced distribution on the output of the channel is fixed, i.e., $r^*_Y=p_Y$. Indeed, the \emph{optimal output distribution} of a channel is unique, and is the unique distribution $r_Y^*$ that achieves the minimum in~\eqref{eq:min-max-q-Y}. 

Let us define
\begin{align}\label{eq:def-K}
\mathcal K:=\{x:\, D(p_{Y|x}\| r^*_Y) = C  \}.
\end{align}
Note that by the above discussion $\mathcal K$ is non-empty. Let $p_X$ be some distribution with $\supp(p_X)\subseteq \mathcal K$, where 
$\supp(p_X):=\{x:\, p(x)>0 \}$. Then we have
\begin{align*}
I(X; Y)_p + D(p_Y\| r^*_Y) = \sum_x p(x) D(p_{Y|x}\| r^*_Y) =  \sum_{x\in \mathcal K} p(x) D(p_{Y|x}\| r^*_Y) = C.
\end{align*}
This means that the inequality in~\eqref{eq:CA-ineq} becomes an equality for all $p_X$ with $\supp(p_X)\subseteq \mathcal K$.

On the other hand, let $p_X\in \Pi$ be some capacity achieving distribution. Then by the above discussion, $p_Y = r^*_Y$. Moreover, we have
\begin{align*}
C & = I(X; Y)_p = \sum_x p(x) D(p_{Y|x}\| r^*_Y) \\
&= \sum_{x\in \mathcal K} p(x) D(p_{Y|x}\| r^*_Y) + \sum_{x\notin \mathcal K} p(x) D(p_{Y|x}\| r^*_Y) \\
& = p(\mathcal K) C + p(\mathcal X\setminus \mathcal K) \max_{x\notin \mathcal K} D(p_{Y|x}\| r^*_Y).
\end{align*}
As a result, we must have $p(\mathcal X\setminus \mathcal K) = 0$, i.e., $\supp(p_X) \subseteq \mathcal K$. 

Finally, suppose that $p_X$ is some distribution with $\supp(p_X)\subseteq \mathcal K$ and $p_Y=r^*_Y$. Then~\eqref{eq:CA-ineq} becomes equality for $p_X$ and since $D(p_Y\| r^*_Y)=0$, $p_X$ is capacity achieving. 

We summarize the above findings in the following proposition.

\begin{proposition}\label{prop:def-Pi}
For any point-to-point channel $p(y|x)$ there is a unique distribution $r^*_Y$ such that for all capacity achieving distributions $p_X\in \Pi$ we have $p_Y=r^*_Y$. Moreover, for any $p_X\in \Pi$ we have $\supp(p_X)\subseteq \mathcal K$ where $\mathcal K$ is defined in~\eqref{eq:def-K}. Indeed, a given distribution $p_X$ is capacity achieving if and only if $\supp(p_X)\subseteq \mathcal K$ and $p_Y= r^*_Y$. In particular $\Pi$ is convex. 
\end{proposition}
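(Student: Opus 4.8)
The statement collects several observations that the preceding discussion has essentially already established, so the plan is mostly to organize those pieces into a clean argument rather than to discover anything new. The starting point is the minimax identity $\max_{p_X}\min_{r_Y}\psi(p_X,r_Y)=\min_{r_Y}\max_{p_X}\psi(p_X,r_Y)=C$, which rests on the joint convexity of KL divergence (so $\psi$ is convex in $r_Y$), the linearity of $\psi$ in $p_X$, and Sion's minimax theorem; I would first note that the outer minima and maxima are attained because the relevant sets are compact (the simplices on $\mathcal X$ and $\mathcal Y$) and the functions are lower/upper semicontinuous, which also justifies the reduction $\min_{r_Y}\max_{p_X}\psi = \min_{r_Y}\max_{x_0}D(p_{Y|x_0}\|r_Y)$. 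This yields \eqref{eq:min-max-q-Y} and the existence of a minimizer $r^*_Y$, hence the key inequality $I(X;Y)_p + D(p_Y\|r^*_Y)\le C$ for all $p_X$, which is \eqref{eq:CA-ineq}.

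From \eqref{eq:CA-ineq} the rest is a sequence of short implications, which I would present as four claims. First, for $p_X\in\Pi$ one has $I(X;Y)_p=C$, so $D(p_Y\|r^*_Y)=0$, hence $p_Y=r^*_Y$; applying this to any fixed $p_X\in\Pi$ shows $r^*_Y$ is uniquely determined as the common output distribution, and incidentally that the minimizer in \eqref{eq:min-max-q-Y} is unique. Second, writing $C = I(X;Y)_p = \sum_{x\in\mathcal K}p(x)D(p_{Y|x}\|r^*_Y) + \sum_{x\notin\mathcal K}p(x)D(p_{Y|x}\|r^*_Y)$ for $p_X\in\Pi$ and bounding each term (the first by $C$ on its mass, the second strictly below $C$ unless its mass is zero, since $x\notin\mathcal K$ means $D(p_{Y|x}\|r^*_Y)<C$) forces $p(\mathcal X\setminus\mathcal K)=0$, i.e.\ $\supp(p_X)\subseteq\mathcal K$. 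Third, conversely, if $\supp(p_X)\subseteq\mathcal K$ then $I(X;Y)_p + D(p_Y\|r^*_Y) = \sum_{x\in\mathcal K}p(x)D(p_{Y|x}\|r^*_Y) = C$; combined with $p_Y=r^*_Y$ this gives $I(X;Y)_p=C$, so such a $p_X$ is capacity achieving, and together with the second claim this is the stated iff characterization.

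Finally, convexity of $\Pi$ is immediate from the characterization: if $p_X, q_X\in\Pi$ then $\supp(p_X),\supp(q_X)\subseteq\mathcal K$, so $\supp(\lambda p_X + \bar\lambda q_X)\subseteq\mathcal K$; and the output map $p_X\mapsto p_Y$ is linear, so the output of $\lambda p_X+\bar\lambda q_X$ is $\lambda r^*_Y+\bar\lambda r^*_Y=r^*_Y$, whence the convex combination lies in $\Pi$. I do not anticipate a genuine obstacle here; the only point requiring a little care is making the semicontinuity/compactness argument for the attainment of the extrema and the applicability of Sion's theorem fully rigorous (in particular that $D(p_Y\|r_Y)$, though possibly $+\infty$, is still lower semicontinuous and convex so the minimax theorem applies), and checking the strict inequality $D(p_{Y|x}\|r^*_Y)<C$ for $x\notin\mathcal K$ that drives the second claim — but this is exactly the definition of $\mathcal K$ in \eqref{eq:def-K} together with the bound \eqref{eq:CA-ineq} applied to point masses, so it costs nothing.
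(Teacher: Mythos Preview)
Your proposal is correct and follows essentially the same route as the paper: the proposition is explicitly a summary of the preceding discussion, and you have faithfully organized those same steps (Sion's minimax yielding \eqref{eq:CA-ineq}, then the implications giving $p_Y=r^*_Y$, $\supp(p_X)\subseteq\mathcal K$, the converse, and convexity). The only additions are your care about attainment of the extrema and your explicit derivation of convexity from the characterization, both of which the paper leaves implicit.
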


The above proposition motivates the following definition. Define $\mathcal K_0$ to be the union of the supports of capacity achieving distributions, i.e.,
\begin{align}\label{eq:def-K-0}
\mathcal K':=\bigcap_{p_X\in \Pi} \supp(p_X).
\end{align}
By the above proposition $\mathcal K'\subseteq \mathcal K$.

For a channel that has a capacity achieving distribution with full support (e.g., a channel for which the uniform distribution is capacity achieving) we have $\mathcal K'=\mathcal K=\mathcal X$. For example, this equality holds for binary symmetric and binary erasure channels. 
Later we will see an example of a channel for which the inclusion $\mathcal K'\subseteq \mathcal K$ is strict.

\begin{proposition}\label{prop:K-p}
There exists $r_X\in \Pi$ such that $\supp(r_X)=\mathcal K'$. Moreover, for any $p_X$ with $\supp(p_X)\subseteq \mathcal K'$ we have
$$I(X; Y)_p + D(p_Y\| r^*_Y)=\sum_{x\in \mathcal K'} p(x)D(p_{Y|x} \| r^*_Y )  = C.$$
\end{proposition}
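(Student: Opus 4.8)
The plan is to establish the two assertions in turn, relying on the convexity of $\Pi$ (Proposition~\ref{prop:def-Pi}) for the first and on the inclusion $\mathcal K'\subseteq\mathcal K$ for the second; recall that $\mathcal K'$ is the union of the supports of the capacity achieving distributions. For the existence of $r_X\in\Pi$ with $\supp(r_X)=\mathcal K'$: since $\mathcal X$ is finite, so is $\mathcal K'$, say $\mathcal K'=\{x_1,\dots,x_m\}$, and by the definition of $\mathcal K'$ there is for each $i$ a distribution $p_X^{(i)}\in\Pi$ with $x_i\in\supp(p_X^{(i)})$. I would take $r_X:=\frac1m\sum_{i=1}^m p_X^{(i)}$, which lies in $\Pi$ by convexity. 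Because all the weights $1/m$ are strictly positive, $\supp(r_X)=\bigcup_{i=1}^m\supp(p_X^{(i)})$; this union contains every $x_i$, hence all of $\mathcal K'$, while each $\supp(p_X^{(i)})\subseteq\mathcal K'$ since $p_X^{(i)}\in\Pi$. Thus $\supp(r_X)=\mathcal K'$.

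For the second assertion, fix any $p_X$ with $\supp(p_X)\subseteq\mathcal K'$. The equality $I(X;Y)_p+D(p_Y\|r^*_Y)=\sum_x p(x)D(p_{Y|x}\|r^*_Y)$ is precisely the definition of $\psi(p_X,r^*_Y)$ recorded at the start of this section, and the sum may be restricted to $x\in\mathcal K'$ because $p(x)=0$ off $\supp(p_X)\subseteq\mathcal K'$. To evaluate it, use $\mathcal K'\subseteq\mathcal K$: by the definition~\eqref{eq:def-K} of $\mathcal K$, every $x\in\mathcal K'$ satisfies $D(p_{Y|x}\|r^*_Y)=C$, so $\sum_{x\in\mathcal K'}p(x)D(p_{Y|x}\|r^*_Y)=C\sum_{x\in\mathcal K'}p(x)=C$.

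I do not expect a real obstacle here: the computation for the second assertion is the same one already used to show that~\eqref{eq:CA-ineq} becomes an equality whenever $\supp(p_X)\subseteq\mathcal K$, now specialized to the subset $\mathcal K'$. The only place that needs a little care is verifying that $r_X$ has support exactly $\mathcal K'$ — the inclusion ``$\supseteq$'' relies on having chosen, for each point of $\mathcal K'$, some capacity achieving distribution whose support contains it, and ``$\subseteq$'' relies on Proposition~\ref{prop:def-Pi} to ensure each $\supp(p_X^{(i)})$ stays inside $\mathcal K'$.
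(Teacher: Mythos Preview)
Your proof is correct and follows precisely the approach the paper sketches: convexity of $\Pi$ for the first claim and the inclusion $\mathcal K'\subseteq\mathcal K$ for the second. One small remark: the containment $\supp(p_X^{(i)})\subseteq\mathcal K'$ is immediate from the definition of $\mathcal K'$ as the union of supports of capacity achieving distributions, not from Proposition~\ref{prop:def-Pi} (which only gives $\supp(p_X^{(i)})\subseteq\mathcal K$).
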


\begin{proof}
The existence of $r_X\in \Pi$ with $\supp(r_X)=\mathcal K'$ follows from the definition of $\mathcal K'$ and the convexity of $\Pi$ established in Proposition~\ref{prop:def-Pi}.  The second claim follows from $\mathcal K'\subseteq \mathcal K$.
\end{proof}

\section{Proof of the main result}
Let $p(y|x)$ and $p(z|x)$ be two channels with capacities $C_1$ and $C_2$ respectively. Let $r^*_Y$ and $s^*_Z$ be the optimal output distributions of the channels (as defined in the previous section) respectively. Also let $\Pi_1$ and $\Pi_2$ be  their associated sets of capacity achieving distributions respectively. Finally let $\mathcal K_1, \mathcal K'_1$ and $\mathcal K_2, \mathcal K'_2$ be their associated subsets of $\mathcal X$ defined by~\eqref{eq:def-K} and~\eqref{eq:def-K-0}. 
Here is the formal statement of our main result.

\begin{theorem}\label{thm:main-thm} 
Suppose that $\mathcal K'_1=\mathcal K'_2=\mathcal X$ and $C_1\geq C_2$. Then the time division rate region $\mathcal R_\TD$ is the capacity region of the broadcast channel $p(y, z|x)=p(y|x)p(z|x)$ if and only if either $C_1< C_2$ and 
\begin{align}\label{eq:main-ineq}
\frac{I(X; Y)}{C_1} \leq  \frac{I(X;Z)}{C_2}, \qquad \forall p_X,
\end{align}
or $C_1=C_2$ and the two channels are more capable comparable. 
\end{theorem}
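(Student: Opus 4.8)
The overall strategy is to squeeze the capacity region $\mathcal C$ between the inner and outer bounds recalled above. Since $\mathcal R_\TD\subseteq\mathcal C$ always holds, the ``if'' direction reduces to proving $\mathcal C\subseteq\mathcal R_\TD$, and for this it suffices to show that the $UV$ outer bound \eqref{eq:UV} lies inside $\mathcal R_\TD$, i.e. that every rate pair admitted by \eqref{eq:UV} obeys $R_1/C_1+R_2/C_2\le 1$. For the ``only if'' direction I argue by contraposition: if \eqref{eq:main-ineq} fails, or if $C_1=C_2$ and the two channels are not more capable comparable, I will produce a rate pair which is achievable --- via a one- or two-auxiliary superposition/Marton scheme --- yet has $R_1/C_1+R_2/C_2>1$, witnessing $\mathcal R_\TD\subsetneq\mathcal C$.

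The engine for both directions is that the assumption $\mathcal K'_1=\mathcal K'_2=\mathcal X$ is very restrictive. Since $\mathcal K'_i\subseteq\mathcal K_i\subseteq\mathcal X$ (Proposition~\ref{prop:def-Pi}), it forces $\mathcal K_1=\mathcal K_2=\mathcal X$, and then the computation around \eqref{eq:def-K} gives the identities
\[
I(X;Y)_p=C_1-D(p_Y\|r^*_Y),\qquad I(X;Z)_p=C_2-D(p_Z\|s^*_Z)\qquad\text{for \emph{every} }p_X .
\]
Consequently \eqref{eq:main-ineq} is equivalent to $D(p_Z\|s^*_Z)/C_2\le D(p_Y\|r^*_Y)/C_1$ for all $p_X$, a form in which the two divergences are taken against the fixed distributions $r^*_Y,s^*_Z$; this is what makes the estimates below go through.

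For the ``if'' direction, fix a distribution $p_{UVX}$ feasible in \eqref{eq:UV} and let $q^v_Y,q^v_Z$ be the channel outputs induced by $p_{X|V=v}$. Since $C_1\ge C_2$, on the pentagon cut out by \eqref{eq:UV} the functional $R_1/C_1+R_2/C_2$ is maximized at the corner that first saturates $R_2$; bounding $R_1+R_2$ there by the second term of the minimum in \eqref{eq:UV}, this corner has value at most $I(U;Y|V)/C_1+I(V;Z)/C_2$ (the remaining corners are treated the same way or are immediately $\le 1$). Now use $I(U;Y|V)\le I(X;Y|V)$ (data processing, as $Y-X-(U,V)$), expand $I(X;Y|V)$ and $I(V;Z)=I(X;Z)-I(X;Z|V)$ by the identities above, and apply the reformulated \eqref{eq:main-ineq} to each $p_{X|V=v}$ to get
\[
\frac{I(U;Y|V)}{C_1}+\frac{I(V;Z)}{C_2}\le 1+\sum_v p(v)\!\left(\frac{D(q^v_Z\|s^*_Z)}{C_2}-\frac{D(q^v_Y\|r^*_Y)}{C_1}\right)-\frac{D(p_Z\|s^*_Z)}{C_2}\le 1 .
\]
Hence \eqref{eq:UV}$\,\subseteq\mathcal R_\TD$, so $\mathcal C=\mathcal R_\TD$; for $C_1=C_2$ one runs the same argument after naming the more capable channel $p(y|x)$, so that the reformulated inequality holds.

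For the ``only if'' direction, suppose \eqref{eq:main-ineq} fails, so there is $p^\circ_X$ with $D(p^\circ_Z\|s^*_Z)/C_2>D(p^\circ_Y\|r^*_Y)/C_1$. I would take a binary auxiliary $U$ with $P(U=2)=\lambda$ small, with $p_{X|U=1}$ a capacity-achieving input of $p(y|x)$ (chosen in $\Pi_1\cap\Pi_2$ when possible) and $p_{X|U=2}$ a small perturbation of $p^\circ_X$, and look at the superposition corner $\big(I(X;Y|U),\min\{I(U;Y),I(U;Z)\}\big)\in\mathcal C$. Using $I(X;Y|U)=C_1-\sum_u p(u)D(q^u_Y\|r^*_Y)$ with $D(q^1_Y\|r^*_Y)=0$, a first-order expansion in $\lambda$ gives $R_1/C_1+R_2/C_2=1+\lambda\kappa+o(\lambda)$ at this corner, where for a good choice of the first cloud center $\kappa$ is a positive multiple of $D(p^\circ_Z\|s^*_Z)/C_2-D(p^\circ_Y\|r^*_Y)/C_1>0$; hence for small $\lambda>0$ the corner falls outside $\mathcal R_\TD$. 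At $C_1=C_2$ with the channels not more capable comparable the superposition corner alone can never beat $\mathcal R_\TD$ (its sum rate is $\le C$), and one instead uses a Marton binning auxiliary built from two distributions, one favouring each receiver. This converse is where I expect the real obstacle to lie: the first-order computation closes immediately only when the first cloud center can be chosen simultaneously capacity achieving for both channels (its channel-$2$ output being then exactly $s^*_Z$), and the degenerate configurations --- $\Pi_1\cap\Pi_2=\emptyset$, or $p^\circ_X\in\Pi_1$ --- require an extra argument exploiting that the set of $p_X$ violating \eqref{eq:main-ineq} is open and that $\mathcal K'_i=\mathcal X$ forces every capacity-achieving input to have full support (or a limiting/perturbation argument on the channels). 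Pinning down this case analysis, together with the matching binning construction at $C_1=C_2$, is where essentially all of the work sits; the forward direction, by contrast, is a fairly mechanical consequence of the two identities above.
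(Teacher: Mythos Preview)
Your ``if'' direction is correct and is essentially the paper's argument written out in the KL-divergence coordinates; the paper simply applies \eqref{eq:main-ineq} directly to each conditional $p_{X|V=v}$ to get $I(X;Y|V)/C_1\le I(X;Z|V)/C_2$ and then uses the chain rule, which is the same computation as yours without the detour through $r^*_Y,s^*_Z$.

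Your ``only if'' direction, however, has a real gap, and the paper's argument is structurally different from your contrapositive plan. Your first-order expansion at the superposition corner does not give the coefficient you claim: with cloud center $r_X\in\Pi_1$ the $\lambda$-coefficient is
\[
\frac{\min\{D(p^\circ_Y\|r^*_Y),\,D(p^\circ_Z\|r_Z)\}}{C_2}-\frac{D(p^\circ_Y\|r^*_Y)}{C_1},
\]
where $r_Z$ is the $Z$-output of $r_X$, not $s^*_Z$; this equals your expression only when both $r_X\in\Pi_1\cap\Pi_2$ and the minimum is attained by the second argument. You correctly flag that the case $\Pi_1\cap\Pi_2=\emptyset$ and the case $C_1=C_2$ need separate treatment, but you do not actually carry either out, and for $C_1=C_2$ you only note that superposition cannot help and gesture at a Marton construction.

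The paper avoids all of this by working \emph{forward} rather than by contraposition. It builds a single Marton auxiliary by time-sharing (via a binary $Q$) between an arbitrary $r_{UX}$ with $r_X\in\Pi_1$ and an arbitrary $s_{VX}$ with $s_X\in\Pi_2$, plugs the resulting achievable corner into $R_1/C_1+R_2/C_2\le 1$, and after simplification obtains a disjunction in which one inequality depends only on $r_{UX}$ and the other only on $s_{VX}$. This separation is the key trick: since the two sides are decoupled, one of the two inequalities must hold \emph{for all} admissible choices. A small-$\epsilon$ perturbation of a full-support capacity-achieving input (this is exactly where $\mathcal K'_i=\mathcal X$ is used) then converts the universal inequality into the KL form, and specializing to point masses plus uniqueness of the optimal output distribution yields \eqref{eq:main-ineq}. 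No case analysis on $\Pi_1\cap\Pi_2$ or on $C_1$ versus $C_2$ is needed beyond a symmetry remark. This separation-of-variables device is the missing idea in your approach.
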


\vspace{.09in}

The rest of this section is devoted to the proof of this theorem.\\

\noindent 
$(\Rightarrow)$ First suppose that the time division region is the capacity region. That is, for any achievable rate pair $(R_1, R_2)$ we have 
\begin{align}\label{eq:TD}
\frac{R_1}{C_1} + \frac{R_2}{C_2}\leq 1.
\end{align}

Let $r_{XU}$ and  $s_{XV}$ be arbitrary distributions.
Define $p_{Q W \widetilde U \widetilde V X}$ by 
\begin{align}\label{eq:margin-Q} 
p(Q=0)=\lambda, \qquad p(Q=1)=\bar \lambda= 1-\lambda,
\end{align}
and according to the following table:
\begin{center}
\begin{tabular}{c|c|c|c}
         & $W$ & $\widetilde U$   & $\widetilde V$ \\
         \hline
 $Q=0$ & $U$ &   $X$ & \text{Const.}\\
 $Q=1$ & $V$ &   \text{Const.} & $X$
\end{tabular}
\end{center}
\label{default}
This table should be understood as follows. Firstو we have $\mathcal Q=\{0,1\}$ and the marginal distribution $p_Q$ is given by~\eqref{eq:margin-Q}. Second, we have $\mathcal W = \mathcal U\, \dot{\cup}\,\mathcal V$, $\widetilde{\mathcal U} = \mathcal X\,\dot{\cup}\,\{u^*\}$ and $\widetilde{\mathcal V} = \mathcal X\,\dot{\cup}\, \{v^*\}$ for two distinguished elements $u^*, v^*$. Third, the conditional distribution $p(w, \tilde u, \tilde v, x|q)$ is given by
\begin{align*}
p(w, \tilde u, \tilde v, x| Q=0) = \begin{cases}
 r(X=x, U=w), & \tilde u=x, \tilde v=v^*, w\in \mathcal U,   
\\
0, & \text{ otherwise,} 
\end{cases}
\end{align*}
and
\begin{align*}
p(w, \tilde u, \tilde v, x| Q=1) = \begin{cases}
 s(X=x, V=w), & \tilde v=x, \tilde u=u^*, w\in \mathcal V,  
\\
0, & \text{ otherwise.} 
\end{cases}
\end{align*}

Now  let $\widetilde W = (Q, W)$ and consider the distribution $p_{\widetilde W\widetilde U\widetilde V XYZ}$ induced by the channel. Observe that $I(\widetilde U; \widetilde V| \widetilde W) = I(\widetilde U; \widetilde V| Q, W) =0$. Then by Marton's coding theorem~\eqref{eq:Marton} the rate pair $(R_1, R_2)$ given by 
\begin{align*}
\begin{cases}
R_2= I(\widetilde V \widetilde W; Z) = I(\widetilde W; Z) + I(\widetilde V; Z|\widetilde W),
\\ R_1+ R_2 = \min\big\{ I(\widetilde W; Y), I(\widetilde W; Z)    \big\} + I(\widetilde U; Y| \widetilde W) + I(\widetilde V; Z| \widetilde W), 
\end{cases}
\end{align*}
is achievable. 
Therefore, by our assumption we must have 
\begin{align*}
\frac{R_1}{C_1} + \frac{R_2}{C_2} = \frac{R_1+R_2}{C_1} + \big(\frac{1}{C_2} - \frac{1}{C_1} \big) R_2 \leq 1,
\end{align*}
and then
\begin{align}\label{eq:bound-21}
 \frac{1}{C_1} \min\big\{ I(\widetilde W; Y), I(\widetilde W; Z) \big\} +  \big(\frac{1}{C_2} - \frac{1}{C_1} \big) I(\widetilde W; Z)  + \frac{1}{C_1}I(\widetilde U; Y| \widetilde W)  + \frac{1}{C_2}I(\widetilde V; Z|\widetilde W)\leq 1.
\end{align}
Let us compute individual terms in the above equation. We have
\begin{align}\label{eq:I-w-y}
I(\widetilde W; Y) = I(Q, W; Y) = I(Q; Y) + I(W; Y|Q) = I(Q; Y) + \lambda I(U; Y)_r + \bar \lambda I(V; Y)_s.
\end{align}
We similarly have
\begin{align}\label{eq:I-w-z}
I(\widetilde W; Z) = I(Q; Z) + \lambda I(U; Z)_r + \bar\lambda I(V; Z)_s.
\end{align}
Moreover, observe that
\begin{align*}
I(\widetilde U; Y| \widetilde W) & = \lambda I(X; Y| U)_r,
\end{align*}
and
\begin{align*}
I(\widetilde V; Z| \widetilde W) & = \bar \lambda I(X; Z| V)_s.
\end{align*}
Putting the above two equations in~\eqref{eq:bound-21} we find that 
\begin{align}\label{eq:gen-ineq}
 \frac{1}{C_1} \min\big\{ I(\widetilde W; Y), I(\widetilde W; Z) \big\} +  \big(\frac{1}{C_2} - \frac{1}{C_1} \big) I(\widetilde W; Z)  + \frac{\lambda}{C_1}I(X; Y| U)_r  + \frac{\bar \lambda}{C_2}I(X; Z|V)_s\leq 1.
\end{align}
We can now consider two cases: either $I(\widetilde W; Y) \geq I(\widetilde W; Z)$ or $I(\widetilde W; Y) < I(\widetilde W; Z)$. 
Then using $C_1\geq C_2$, equations~\eqref{eq:I-w-y} and~\eqref{eq:I-w-z}, and ignoring some non-negative terms in~\eqref{eq:gen-ineq} we find that 
\begin{align*}
\begin{cases}
\frac{1}{C_1} \Big(\lambda I(U; Y)_r +\bar \lambda I(V; Y)_s \Big) +\big(\frac{1}{C_2} - \frac{1}{C_1} \big)\bar\lambda I(V; Z)_s+ \frac{\lambda}{C_1}I(X; Y| U)_r  + \frac{\bar \lambda}{C_2}I(X; Z|V)_s\leq 1,\\
~\text{or}\\
\frac{1}{C_2}   \Big(\lambda I(U; Z)_r +\bar \lambda I(V; Z)_s \Big)  + \frac{\lambda}{C_1}I(X; Y| U)_r  + \frac{\bar \lambda}{C_2}I(X; Z|V)_s\leq 1.
\end{cases}
\end{align*}

Now suppose that we chose  $r_{XU}$ and $s_{XV}$ such that $r_X\in \Pi_1$ and $s_X\in \Pi_2$. Then using $C_1=I(X; Y)_r = I(UX; Y)_r = I(U; Y)_r + I(X; Y|U)_r$ and $C_2=I(X; Z)_s = I(VX; Z)_s = I(U; Z)_s + I(X; Y|V)_s$, by a simple algebra we arrive at 
\begin{align*}
\begin{cases}
 I(V; Y)_s \leq  I(V; Z)_s,\\
~\text{or}\\
\frac{1}{C_2}  I(U; Z)_r \leq  \frac{1}{C_1}I(U; Y)_r.
\end{cases}
\end{align*}
Observe that the first inequality here depends only on $s_{XV}$ and the second one is solely in terms of $r_{XU}$. Then either the first one holds for every valid choice of $s_{XV}$ or the second one holds for every valid choice of $r_{XU}$. This means that either 
\begin{align}\label{eq:O1}
I(V; Y) \leq  I(V; Z), \qquad \forall s_{VX} \,\text{ s.t. }\, s_X\in \Pi_2,
\end{align}
or
\begin{align}\label{eq:O2}
\frac{1}{C_2}I(U; Z) \leq  \frac{1}{C_1}I(U; Y), \qquad \forall r_{UX} \,\text{ s.t. }\, r_X\in \Pi_1.
\end{align}

Let us suppose that~\eqref{eq:O1} holds. Fix $s_X\in \Pi_2$ to be a capacity achieving distribution for $p(z|x)$ with $\supp(s_X)=\mathcal K'_2=\mathcal X$ whose existence is guaranteed by Proposition~\ref{prop:K-p}.  Let $p_X$ be an arbitrary distribution. Define $s_{VX}$ as follows. Let $\mathcal V=\{0,1\}$ and define $s(V=0)=\epsilon$ and $s(V=1)=1-\epsilon$. Also let 
$$s(x|V=0)=p(x), \qquad s(x|V=1)=\frac{1}{1-\epsilon} s(x) - \frac{\epsilon}{1-\epsilon}p(x).$$
Observe that $\supp(p_X)\subseteq \supp(s_X)=\mathcal X$, so for sufficiently small $\epsilon>0$, both $s(x|V=0)$ and $s(x|V=1)$ are valid distributions. Then we obtain a distribution $s_{VX}$ on $\{0,1\}\times \mathcal X$ whose marginal on $\mathcal X$ is the distribution $s_X\in \Pi_2$ we started with.  Since we assumed that~\eqref{eq:O1} holds, for any sufficiently small $\epsilon>0$ we have $I(V; Y)\leq I(V;Z)$. Now a simple computation verifies that $I(V;Y) = \epsilon D(p_Y\| s_Y)+\Theta(\epsilon^2)$ and $I(V; Z) = \epsilon D(p_Z\| s_Z) + \Theta(\epsilon^2)$.  Therefore, we have
\begin{align}\label{eq:op-01}
D(p_Y\| s_Y) \leq  D(p_Z\| s_Z^*), \qquad \forall p_X.
\end{align}

Starting from~\eqref{eq:O2} and following similar arguments we find that  
\begin{align}\label{eq:op-02}
\frac{1}{C_2}  D(p_Z \| r_Z) \leq  \frac{1}{C_1}D(p_Y\| r^*_Y), \qquad \forall p_X, 
\end{align}
where $r_X\in \Pi_2$ is a capacity achieving distribution for $p(y|x)$ with $\supp(r_X)=\mathcal K'_1=\mathcal X$. Then either~\eqref{eq:op-01} or~\eqref{eq:op-02} is satisfied.

Let us in~\eqref{eq:op-01} and~\eqref{eq:op-02} restrict ourself to $p_X$ of the form $p(x) = \delta_{x, x_0}$, where $\delta_{x, x_0}$ denotes the Kronecker delta function and $x_0\in \mathcal X$ is arbitrary. Then either
\begin{align}\label{eq:op-11}
D(p_{Y|x_0}\| s_Y) \leq  D(p_{Z|x_0}\| s_Z^*)=C_2, \qquad \forall x_0 ,
\end{align}
or
\begin{align}\label{eq:op-12}
\frac{1}{C_2}  D(p_{Z|x_0} \| r_Z) \leq  \frac{1}{C_1}D(p_{Y|x_0}\| r^*_Y) = 1, \qquad \forall x_0,
\end{align}
holds.

If~\eqref{eq:op-01} and then~\eqref{eq:op-11} hold, we have $\max_{x_0} D(p_{Z|x_0} \| s_Y) \leq C_2$, which using~\eqref{eq:min-max-q-Y}  gives $C_1\leq C_2$. Then by our assumption $C_1\geq C_2$ we arrive at $C_1=C_2$. Therefore,~\eqref{eq:op-01} does not hold if $C_1> C_2$.  Moreover, if $C_1=C_2$,~\eqref{eq:op-01} and~\eqref{eq:op-02} are symmetric. So in both cases, with no loss of generality we may assume that~\eqref{eq:op-02} and then~\eqref{eq:op-12} are satisfied. 

We note that~\eqref{eq:op-12} implies that $\max_{x_0} D(p_{Z|x_0} \| r_Z)\leq C_2$. Thus $r_Z$
 is an optimal output distribution for $p(z|x)$, and by its uniqueness $r_Z=s^*_Z$. Then~\eqref{eq:op-02} reduces to
$$\frac{1}{C_2}  D(p_Z \| s^*_Z) \leq  \frac{1}{C_1}D(p_Y\| r^*_Y),\qquad \forall p_X.$$
On the other hand, by Proposition~\ref{prop:K-p} we have $D(p_Y\| r^*_Y) = C_1- I(X; Y)_p$ and $D(p_Z\| s^*_Z)= C_2-I(X; Z)_p$.
Using these in the above inequality gives~\eqref{eq:main-ineq}.

\vspace{.2in}

\noindent $(\Leftarrow)$ We now prove the converse. We assume that either $C_1> C_2$ and~\eqref{eq:main-ineq} holds, or $C_1=C_2$ and the two channels are more-capable comparable. In the latter case, by symmetry with no loss of generality we assume that $p(y|x)$ is more capable than $p(z|x)$. Then in both cases~\eqref{eq:main-ineq} holds. Assuming this
we show that time division is optimal for the broadcast channel $p(y|x)p(z|x)$.  

Let $(R_1, R_2)$ be an achievable rate pair. By the $UV$ outer bound~\eqref{eq:UV}, there exists $p_{UVX}$ such that 
\begin{align*}
R_2&\leq I(V; Z),\\
R_1+R_2&\leq I(V; Z) + I(U; Y|V)\leq I(V; Z) + I(X; Y|V).
\end{align*}
Then using the fact that $C_1\geq C_2$ we find that 
\begin{align*}
\frac{R_1}{C_1} + \frac{R_2}{C_2} &\leq \frac{I(X; Y|V)}{C_1} + \frac{I(V; Z)}{C_2}\\
& \leq \frac{I(X; Z|V)}{C_2} + \frac{I(V; Z)}{C_2}\\
& \leq \frac{I(X; Z)}{C_2}\\
& \leq 1,
\end{align*}
where in the second line we use~\eqref{eq:main-ineq}. We are done.

\section{Example}

In the statement of Theorem~\ref{thm:main-thm} we assume that $\mathcal K'_1=\mathcal K'_2 = \mathcal X$. This may seem an unnecessary technical assumption that is forced by our proof method.  
Here we give an example to illustrate that Theorem~\ref{thm:main-thm} does not hold without it.

Let $\mathcal A, \mathcal B$ be two finite disjoint sets with $|\mathcal A| \geq |\mathcal B|\geq 2$. Let $\mathcal X=\mathcal A\cup \mathcal B$, $\mathcal Y=\mathcal A$ and $\mathcal Z=\mathcal B$. Define channels $p(y|x)$ and $p(z|x)$ by
\begin{align*}
p(y| x) = 
\begin{cases}
\delta_{x,y}  &\quad x\in \mathcal A,\\
\frac{1}{|\mathcal B|} &\quad   x\in \mathcal B,
\end{cases}
\qquad\qquad 
p(z| x) = 
\begin{cases}
\frac{1}{|\mathcal B|}  &\quad x\in \mathcal A,\\
\delta_{z, x} &\quad   x\in \mathcal B,
\end{cases}
\end{align*}
where $\delta$ denotes the Kronecker delta function.  

The uniform distribution on the subset $\mathcal A\subset \mathcal X $ is the unique capacity achieving distribution of $p(y|x)$, and its optimal output distribution is the uniform distribution on $\mathcal Y$. Then $\mathcal K'_1 = \mathcal A$. Likewise the unique capacity achieving distribution of $p(z|x)$ is the uniform distribution on $\mathcal B\subset \mathcal X$ and we have $\mathcal K'_2 = \mathcal B$.
Moreover, $C_1=\log |\mathcal A| \geq \log |\mathcal B|= C_2$. 
Also note that~\eqref{eq:main-ineq} does not hold, nor the two channels are more-capable comparable. Indeed, if $p_X$ is a non-trivial distribution supported only on $\mathcal A$, then $I(X; Y)_p>0$ while $I(X; Z)_p=0$. Similarly if $p_X$ is non-trivial and supported only on $\mathcal B$, then $I(X; Y)_p=0$ while $I(X; Z)_p>0$.
Nevertheless, we show in the following that time division is optimal for the broadcast channel $p(y|x)p(z|x)$.

Let $(R_1, R_2)$ be an achievable rate pair. Again by the $UV$ outer bound~\eqref{eq:UV}, there exists $p_{UVX}$ such that 
\begin{align*}
R_2&\leq I(V; Z),\\
R_1+R_2&\leq I(V; Z) + I(U; Y|V)\leq I(V; Z) + I(X; Y|V).
\end{align*}
Then using the fact that $C_1\geq C_2$ we find that 
\begin{align}\label{eq:A123}
\frac{R_1}{C_1} + \frac{R_2}{C_2} &\leq \frac{I(X; Y|V)}{C_1} + \frac{I(V; Z)}{C_2}\nonumber\\
& \leq \frac{I(X; Y)}{C_1} + \frac{I(V; Z)}{C_2},
\end{align}
where in the second line we use the fact that $V-X-Y$ forms a Markov chain.

Let $Q$ be a binary random variable that equals $0$ if $X\in \mathcal A$ and equals $1$ if $X\in \mathcal B$. 
Then we have
\begin{align*}
I(X; Y) & = I(XQ; Y)\\
&  = H(Y) - p(Q=0) H(Y| X, Q=0) -p(Q=1) H(Y| X, Q=1)\\
& = H(Y) - p(Q=1) \log |\mathcal A |\\
& \leq p(Q=0) C_1.
\end{align*}
We similarly have $I(X; Z) \leq p(Q=1) C_2$. Putting these in~\eqref{eq:A123} we find that $R_1/C_1 + R_2/C_2\leq 1$. Therefore, time division is optimal for $p(y|x)p(z|x)$.


\vspace{.23in}
\noindent\textbf{Acknowledgements.}   The author is thankful to Chandra Nair for several fruitful discussions about broadcast channels. The author was supported in part by Institute of Network Coding of CUHK and by GRF grants 2150829 and 2150785.



\begin{thebibliography}{10}

\bibitem{ElGamalKim} A.~El Gamal and Y.-H.~Kim, Network information theory, Cambridge university press (2011).


\bibitem{Marton} K.~Marton, A coding theorem for the discrete memoryless broadcast channel, IEEE Trans.~Info.~Theory \textbf{25}, 306-311 (1979).

\bibitem{UV} C.~Nair and A.~El Gamal, An outer bound to the capacity region of the broadcast channel,  IEEE Trans.~Info.~Theory \textbf{53}, 350-355 (2007).


\bibitem{GGA14} A.~Gohari, A.~El Gamal and V.~Anantharam, 
On Marton's Inner Bound for the General Broadcast Channel, IEEE Trans.~Info.~Theory \textbf{60}, 3748-376 (2014).

\bibitem{CoverThomas} T.~M.~Cover and J.~A.~Thomas, Elements of Information Theory, John Wiley \& Sons (2012).




\end{thebibliography}
\end{document}